\tikzset{main node/.style={circle,draw,minimum size=1cm,inner sep=0pt},}
\tikzstyle{container} = [draw, rectangle, inner sep=0.1cm]
\tikzstyle{container-ellipse} = [draw, ellipse, inner sep=0.1cm]
\definecolor{darkgreen}{rgb}{0,0.5,0}
\newcommand{\ie}{\textit{i.e.}}
\newcommand{\senseAct}{\Sigma}
\newcommand{\ctrlAct}{A}
\newcommand{\attackAct}{\act_2}
\newcommand{\calO}{\mathcal{O}}
\newcommand{\obs}{\mathsf{Obs}}
\newcommand{\dist}{\mathcal{D}}
\newcommand{\act}{\mathsf{A}}
\newcommand{\sense}{\sigma}
\newcommand{\att}{\beta}
\newcommand{\supp}{\mathsf{Supp}}
\newcommand{\plays}{\mathsf{Plays}}
\newcommand{\pre}{\mathsf{Pre}}
\newcommand{\post}{\mathsf{Post}}
\newcommand{\prefplays}{\mathsf{Prefs}}
\newcommand{\win}{\mathsf{Win}}
\newtheorem{theorem}{Theorem} 
\newtheorem{definition}{Definition}
\newtheorem{problem}{Problem}
\newtheorem{example}{Example}
\acrodef{pomdp}[POMDP]{partially observable Markov decision process}
\acrodef{mdp}[MDP]{Markov decision process}
\acrodef{asw}[ASW]{Almost-Sure Winning}
\acrodef{cps}[CPS]{Cyber-Physical Systems}
\title{Qualitative Planning in Imperfect Information Games with Active Sensing and Reactive Sensor Attacks: Cost of Unawareness}
\author{Abhishek N. Kulkarni, Shuo Han, Nandi O. Leslie, Charles A. Kamhoua, and Jie Fu$^\ast$
\thanks{A. Kulkarni and J. Fu ($^\ast$ corresponding author) are with the Robotics Engineering Program and Dept. of Electrical and Computer Engineering, Worcester Polytechnic Institute, Worcester, MA 01609 USA.
{\tt\small \{ankulkarni,jfu2\}@wpi.edu}}
\thanks{S. Han is with the Department of Electrical and Computer Engineering, University of Illinois, Chicago, IL 60607.
{\tt\small hanshuo@uic.edu}}
\thanks{N. Leslie is with Raytheon Technologies. 
{\tt\small nandi.o.leslie@raytheon.com}}
\thanks{C. Kamhoua is with U.S. Army
Research Laboratory. 
{\tt\small charles.a.kamhoua.civ@mail.mil}}
}
\begin{document}
\maketitle
\thispagestyle{empty}
\pagestyle{empty}

\begin{abstract}
We consider the probabilistic planning problem  where the agent (called Player 1, or P1) can jointly plan the control actions and sensor queries in a sensor network and an attacker (called player 2, or P2) can carry out attacks on the sensors. We model such an adversarial interaction using a formal model--a reachability   game with partially controllable observation functions. The main contribution of this paper is to assess the cost of P1's unawareness: Suppose P1 misinterprets the sensor failures as probabilistic node failures due to unreliable network communication, and P2 is aware of P1's misinterpretation in addition to her partial observability. Then, from which states can P2 carry out sensor attacks to ensure, with probability one, that P1 will not be able to complete her reachability task even though, due to misinterpretation, P1 believes that she can almost-surely achieve her task. We develop an algorithm to solve the almost-sure winning sensor-attack strategy given P1's observation-based strategy. Our attack analysis could be used for attack detection in wireless communication networks and the design of provably secured attack-aware sensor allocation in decision-theoretic models for cyber-physical systems.
\end{abstract}

\keywords 
Cyber-physical security;
Formal methods; Games on graphs; Sensor Attack.
\endkeywords

\section{Introduction}
    \label{sec:introduction}

Security of \ac{cps} has been studied extensively in the systems and control community (see a survey in \cite{dibajiSystemsControlPerspective2019}). In these control-theoretic approaches, the system under consideration is modeled as a linear or nonlinear dynamical system. The malicious attacker on sensors and actuators can inject errors into the sensor measurement \  (such as spoofing attacks or false data injection attacks), block or delay signals (as in denial of services and jamming attacks) \cite{Xing2010,cetinkayaOverviewDenialofServiceAttacks2019}. Detection and secured estimation and control under malicious attacks have been investigated to ensure  resilience, stability, and robustness of the control system. Secured state estimation under sensor and/or actuator attacks has been studied for linear time-invariant systems \cite{shoukrySecureStateEstimation2017,changSecureEstimationBased2018,showkatbakhshSecuringStateReconstruction2020}.

While a plethora of work analyze \ac{cps} security from a systems and control perspectives, discrete event systems and stochastic games have also been employed to analyze secured control under attacks. The authors in    \cite{niuOptimalSecureControl2020} model the attacker-defender interaction  as a turn-based stochastic Stackelberg game, where the attacker observes the defender's strategy before deciding how to carry out attacks  and the defender is to maximize the probability of satisfying a temporal logic formula. They assume that both players have complete observations and the game state is controlled by the joint attacker and defender actions. In supervisory control, the supervisor aims to control the behaviors of a DES to be within a desired language while an attacker can add, delete, or replace symbols (sensor inputs or actuator outputs). The authors  in \cite{wangSupervisoryControlDiscrete2019} propose  finite-state transducers as attack models, which are known to the supervisor.  A composition of the plant and the  attack transducers  is generated to evaluate, whether the supervisor can still control the plant's outputs within the desired language despite of the malicious attacks.  In \cite{wakaikiSupervisoryControlDiscreteEvent2019}, the authors do not assume that attacker's policy is known but instead introduce a class of multi-adversary games where the supervisor is played  against multiple possible attackers on sensors.  They proposed the condition for controllability and observability under  such attacks and the supervisory control design to achieve the control objective robustly.
Besides mitigating sensor/actuator attacks, opacity in DESs \cite{dubreilSupervisoryControlOpacity2010,sabooriOpacityEnforcingSupervisoryStrategies2012,wuSynthesisInsertionFunctions2014} is  to mitigate attacks on confidentiality of the systems.

Our work is motivated by the development of networked robotic systems and the security issues. Networked robots are a class of \ac{cps}s that focus on the seamless integration of robots  to assist humans in difficult tasks such as security paroling and contested search and rescue. For these applications, the robots operate in an uncertain and potentially adversarial environment, subject to malicious attacks in cyber and physical spaces. In this scope, we focus on sensor attack in networked robots and employ decision-theoretic models to nvestigate how the sensor attack can affects an intelligent agent's information states, beliefs, and consequentially its decisions. Consider a robot is to achieve a task in
a stochastic environment with partial observations. At run-time, the robot can \emph{actively} query a subset
of sensors in a wireless  network to update his belief about the state and the progress with
respect to the task. However, the attacker can choose unsafe sensor
nodes and carry out reactive jamming attacks \cite{cetinkayaOverviewDenialofServiceAttacks2019}. 
 Such an attack will target the active communication and 
 block the sensor information  from reaching the agent. 
Our key questions are, if the robot
mistakes  sensor failures as probabilistic node failures in the wireless network \cite{wangReliabilityWirelessSensor2017}, how would the
robot compute its active sensing and control policy to reach a goal state with probability one, \ie almost-surely win? How shall the attacker
exploit the robot's misinterpretation of the sensor failures for carrying out successful sensor attacks that manipulate the robot's information states?  With the knowledge about attacker's almost-sure winning region given the naive robot, we can understand how the physical task completion is coupled with the vulnerabilities in the cyber network and gain critical insight about hardening network security with task-oriented sensor allocation and attack-aware active sensing strategies. Our technical approach employs a formal model called stochastic game on graphs \cite{chatterjeePartialobservationStochasticGames2012,chatterjeeSurveyStochasticOregular2012,bloemGraphGamesReactive2018}. The key contributions are the following: \begin{inparaenum}[1)]
	\item We formulate  a class of \ac{pomdp}s with reachability objectives. The model generalize from regular POMDPs to include both actions and sensor queries as the decisions for the robot.
	\item We extend probabilistic model checking in POMDPs to answer, from which set of information states, the robot has an observation-based control and active sensing strategy that ensures the task can be achieved with  probability one given probabilistic action outcomes and probabilistic sensor failures. 
	\item  We analyze the adversary's game to compute, given the naive robot's strategy, when to attack which sensors so that the robot cannot achieve the task with probability one, even from the robot believes it can do so.  
\end{inparaenum} The solution approach is illustrated using a running example.

Our modeling and design approaches differ from existing literature on secured control design in three key aspects: 1) In our model, the system dynamics is probabilistic instead of nondeterministic (as studied in DESs) or linear, time-invariant (as studied in secured control in \ac{cps}s).  2) Our    objective is to synthesize the almost-sure winning strategy for the robot and use it to assess the almost-sure winning region for the attacker who exploits the robot's unawareness of attacks. This criteria of performance is different from controllability and observability in DESs. Due to the modeling and objective differences, our solution approaches differ greatly from that being used in DESs and linear systems. 3)  We do not assume the knowledge of attack strategies. Instead, we analyze the game and compute the attack strategy given the attacker's information and assumption of the agent's perception. This formal synthesis provides security insights considering the worst-case attacker strategy. 4) In relation to other game-theoretic based secured control in \ac{cps}s, we   focus on sensor attacks given partial observations and active sensor queries, instead of actuator attacks with perfect observations  \cite{niuOptimalSecureControl2020}.

\section{Problem Formulation}
    \label{sec:problem}
    
We consider the probabilistic planning of an autonomous agent (called Player 1, or P1, pronoun `she') in an adversarial environment, where an attacker (called player 2, or P2, pronoun `he') can carry out attacks on the sensors of P1. The objective of P1 is to reach a set of goal states. Such an interaction between P1 and P2 is captured using the following model, where  P1's observation function is partially controllable.

\begin{definition}[Zero-sum Stochastic Reachability Game with Partially Controllable Observation Function]
    \label{def:system}
    A two-player stochastic game with partially controllable observation function in which P1 has a reachability objective is a tuple 
    \[
        G = \langle S, \ctrlAct, P, s_0, \Gamma, \senseAct \times \attackAct, \calO, \obs, o_0, F \rangle 
    \]
    where 
    \begin{itemize}
        \item $\langle S, \ctrlAct, P, s_0 \rangle$ is a \ac{mdp} where $S$ is a finite set of states; $A$ is a finite set of actions; $s_0$ is an initial state; $P: S \times A \times S \rightarrow [0, 1]$ is a transition probability function such that $\sum_{s' \in S}P(s, a, s') = 1$ for any state $s \in S$ and action $a \in A$;
        
        \item  $\Gamma= \{0,1,\ldots, N\}$ is a set of indexed sensors. Sensor $i$ covers a subset of states. The function $\gamma: \Gamma \rightarrow 2^S$ maps a sensor to a set of states covered by that sensor.
        
        \item $\senseAct\subseteq 2^\Gamma$ is a set of sensor query actions of P1, each of which requests a sensor reading from a unique subset of sensors from $\Gamma$;

        \item $\attackAct \subseteq 2^\Gamma$ is a set of sensor attack actions of P2, each of which jams a sensor reading of a subset of sensors from $\Gamma$;
        
        \item $\calO \subseteq 2^S$ is a finite set of observations;
        
        \item $\obs: S \times \senseAct \times \attackAct \rightarrow \calO$ is a deterministic observation function, which maps a state $s \in S$, a sensor query action $\sense$ and a sensor attack action $\att$ into an observation $o = \obs(s, \sense, \att) \in \calO$. Two states $s, s' \in S$ are said to be observation equivalent given the sensing action and sensor attack action $\sense,\att$ if $\obs(s,\sense,\att)= \obs(s',\sense,\att)$. The observation $o$ includes the set of observation-equivalent states.
        
        \item $o_0 =\{s_0\} \in \calO$ is an initial observation;
        
        \item $F \subseteq S$ is a set of final states that P1 must reach to complete her task. 
    \end{itemize}
\end{definition}

To simplify the exposition of concepts in this paper, we consider the sensors in $\Gamma$ to be Boolean sensors. That is, each sensor $i$ returns a Boolean value, denoted $v_i$: $v_i=\top$ if the states covered by the sensor contains the \emph{current} state of $G$ and $v_i=\bot$ otherwise. Whenever a sensor is attacked, it returns a special value $v_i=?$ denoting a sensor failure. Given a set $\Gamma$ of sensors, the set of observations $\calO$ can be constructed by using \textsc{GetObservation} in Alg.~\ref{alg:deduction} for each state $s \in S$, sensing action $\sense \in \senseAct$ and $\att \in \attackAct$.

\begin{algorithm}[t]
\caption{\textsc{GetObservation}}
\label{alg:deduction}

\begin{algorithmic}[1]
\item[\textbf{Inputs:}] $s, \sense, \att$
\item[\textbf{Outputs:}] $V$  
    \State $V = S$ 
    \ForAll{$i \in \sense \setminus \att$}
        \If{$v_i=\top$}
            \State $V \gets V \cap \gamma(i)$
        \Else
            \State $V \gets V \cap (S \setminus \gamma(i))$
        \EndIf
    \EndFor
    \State \Return $V$
\end{algorithmic}
\end{algorithm}

It is noted that in the absence of sensing attacks, the game is a \ac{pomdp} with active sensing actions. With sensor attacks, the observation of P1 is partially controllable. 

\textbf{Game Play.} The game play in $G$ is constructed as follows. From the initial state $s_0$, P1 obtains the initial observation $o_0$. Based on the observation, P1 selects a control action $a_0 \in \ctrlAct$ and a sensor query action $\sense_0 \in \senseAct$. The system moves to state $s_1$ with probability $P(s_0, a_0, s_1)$. At state $s_1$, P2 selects an attack action $\att_0 \in \attackAct$. The system generates a new observation $o_1 = \obs(s_1,\sense_0,\att_0)$. This process repeats until P1 \emph{knows}, with probability one, that a state in $F$ is reached. We denote the resulting play as $\rho = s_0 a_0 \sense_0 \att_0 s_1 a_1 \sense_1 \att_1 \ldots$. The set of infinite plays in $G$ is denoted by $\plays(G)$ and the set of finite prefixes of plays is denoted by $\prefplays(G)$. We say a state $s \in S$ to be \emph{reachable} in $G$ if there exists a prefix $\rho \in \prefplays(G)$ such that $s \in \rho$. We denote the set of possible reachable states given a state-action pair $(s,a)$ by  $\post_G(s, a)$. Formally,   $\post_G(s, a)= \{s'\in S \mid P(s, a, s') > 0\} $. Abusing the notation, we define $\post_G(B, a) = \bigcup_{s \in B} \post_G(s, a)$ where $B\subseteq S$ is a subset of states.

\textbf{Strategies.} Given the fact that P1 must determine, simultaneously, an action $a\in A$ and a sensor query action $\sense \in \senseAct$, we denote $\act_1  = A \times \senseAct$ as the action space of P1 and $\act_2= \attackAct$  as the action space of P2. A history-dependent, randomized (resp., deterministic) strategy for player $j \in \{1, 2\}$ is a function $\pi_j : \prefplays(G) \rightarrow \dist(\act_j)$ (resp., $\pi_j : \prefplays(G) \rightarrow \act_j$). We say that player $j$ follows strategy $\pi_j$ if for any prefix $\rho\in \prefplays(G)$ at which $\pi_j$ is defined, player $j$ takes the action $\pi_j(\rho)$ if $\pi_j$ is deterministic, or an action $a \in \supp(\pi_j(\rho))$ with probability  $\pi_j(\rho,a)$ if $\pi_j$ is randomized. The set of all strategies of player $j$ is denoted by $\Pi_j$. 

A strategy $\pi_1 \in \Pi_1$ is said to be \emph{almost-sure winning} for P1 over a reachability objective to visit $F$ if, for any $\pi_2 \in \Pi_2$, P1 is guaranteed to visit $F$ with probability 1. Similarly, a strategy $\pi_2$ is almost-sure winning for P2 against P1 if, for any $\pi_1\in \Pi_1$, P2 ensures with probability one that no state in $F$ can be visited. A state is called an \emph{almost-sure winning} state for a player, if there exists an almost-sure winning strategy for the player at that state.

  \textbf{Observation Equivalence.} Given the observation function $\obs$, an \emph{observation} of a play $\rho$ is defined as $\obs(\rho)= o_0 (a_0, \sense_0,\att_0) o_1 (a_1, \sense_1, \att_1)\ldots $ where $o_{i+1}=\obs(s_{i+1}, \sense_i, \att_i)$ for all $i\ge 0$ and $o_0$ is the initial observation. Two plays (or play prefixes) $\rho, \rho'$ are said to be observation-equivalent, denoted by $\rho \sim \rho'$. A strategy is said to be \emph{observation-based} if $\pi_j(\rho) = \pi_j(\rho')$ whenever $\rho \sim \rho'$. We denote the set of all observation-based strategies of P1 by $\Pi_1^O$.

\textbf{Information Structure.} In this paper, we consider a game with one-sided partial observation, in which P1 has partial and P2 has perfect observations. Thus, the adversarial interaction in game $G$ is characterized as follows. During her turn, P1 uses the sensing action to reduce the uncertainty in her belief about the \emph{current} state. Whereas P2, who knows the current state, uses attack actions to control how much information P1 gains about the \emph{current} state of the game. To make the problem nontrivial, we also consider that the attack actions can be limited.

We assume the information about the game is asymmetric and incomplete for P1. Specifically, P1 considers the failures of a subset of sensors $\att \in \attackAct$ to be probabilistic failures. Only P2 has the correct, complete information of the game. Hereafter, we refer to such P1 as a \emph{na\"ive P1}. Our main goal is to understand the cost of such unawareness of sensor attacks: \emph{whether it is possible that P2 can win from a state that P1 believes to be almost-surely winning for her?} We formalize our problem as follows.

\begin{problem}
    Given the stochastic game $G$ with partially controllable observation function in which P1 has partial and P2 has perfect observability, determine when (for which prefix in $\prefplays$) there exists an \emph{observation-based} strategy $\pi_1 \in \Pi_1^O$ using which the na\"ive P1 believes that she can satisfy a reachability objective over $F$ with probability one, no matter which strategy P2 plays. Also, determine when there exists a P2's best attack strategy that prevents the na\"ive P1 from satisfying her reachability objective.
\end{problem}

\section{Main Result}
    \label{sec:type3naive}

In this section, we present a \emph{qualitative} analysis of game $G$, in which we show how to compute the almost-sure winning strategies of na\"ive P1 who has partial observability.
We start by introducing a running example to illustrate the advantage of sensing actions and the effect of sensor attacks.

\begin{example}[Part I] \label{ex:pt-1}
    Fig.~\ref{fig:ex-G} represents a two-player stochastic game with partially controllable observation function in which $s_0$ is the initial state and $s_4$ is the final state. The set $\Gamma$ consists of 4 sensors $0$ (red), $1$ (blue), $2$ (green), $3$ (violet) covering states $\{s_0, s_1\}, \{s_1, s_2\}, \{s_0, s_2, s_3\}$ and $\{s_4, s_5\}$, respectively. P1 has three control actions, $\ctrlAct = \{a_0, a_1, a_2\}$ and three sensing actions $\sense_0, \sense_1$ and $\sense_2$ which query the sensors $\{0,1\}, \{0,2\}$ and $\{2\}$, respectively. P2 has three attack actions, $\att_0, \att_1$ and $\att_2$ that jam the sensors $0,1$ and $2$, respectively. The initial observation $o_0=\{s_0\}$. Given our focus is on qualitative analysis, the probabilistic transitions are labeled with actions only. For instance, at state $s_0$, the action $a_0$ can be taken and reach any of the states $s_1$, $s_2$, and $s_0$ with a positive probability. The edges from $s_0$ labeled $a_0$ are the probabilistic outcomes given that action.
    
    \begin{figure}
        \centering
       \begin{tikzpicture}[->,>=stealth',shorten >=1pt,auto,node distance=2.5cm, scale =0.8,transform shape]

  \node[state, initial] (0) {$s_0$};
  \node[state] (2) [right of=0] {$s_2$};
  \node[state] (1) [above of=0] {$s_1$};
  \node[state] (3) [right of=2] {$s_3$};
  \node[state, accepting] (4) [above of=2] {$s_4$};
  \node[state] (5) [above of=3] {$s_5$};
   \node [container,fit=(0) (1),draw=red,dashed,line width=0.2mm  ] (container) {};
   \draw [rotate=45,draw=blue, dashed,line width=0.2mm] (1.75,0) ellipse (0.75cm and 2.5cm);
    \node [container,fit=(0) (2) (3) ,draw=green, dashed,line width=0.2mm] (container) {};
    
    \node [container,fit=(4) (5) ,draw=violet, dashed,line width=0.2mm] (container) {};
  \path (0) edge[loop below]  node {$a_0, a_1$} (0)
        (0) edge[bend left]   node {$a_0$} (1)
        (0) edge[bend left]   node {$a_0$} (2)
        
        (0) edge[bend right]  node[pos=0.2] {$a_1$} (1)
        (0) edge[bend right]  node {$a_1$} (2)
        
        (0) edge              node[pos=0.2] {$a_2$} (2)
        (0) edge[bend right, out=-60, in=240]  node {$a_3$} (3)
        
        (1) edge              node {$a_0$} (4)
        (1) edge[bend left, out=60, in=120]   node {$a_1$} (5)
        
        (2) edge              node {$a_0$} (5)
        (2) edge              node[pos=0.2]{$a_1$} (4)
        
        (3) edge[bend left]   node {$a_0$} (5)
        (3) edge[bend right]  node {$a_1$} (5)
  ;

\end{tikzpicture}
        \caption{A two-player stochastic game with partially controllable observation function. The dashed region represent the sensors: $0$ (red), $1$ (blue), $2$ (green), $3$ (violet). }
        \label{fig:ex-G}
    \end{figure}
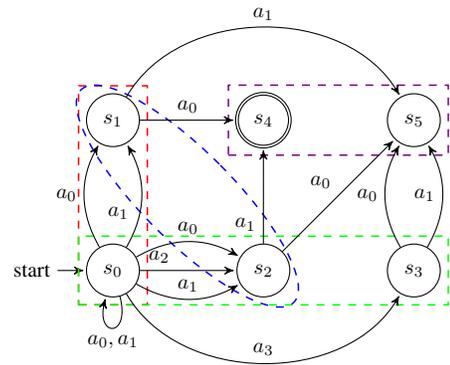
    
    To illustrate the advantage of sensing actions, we first analyze the game in Fig.~\ref{fig:ex-G} when P1 has no sensing actions and P2 has no attack actions, P1 has no almost-sure winning strategy at $s_0$. This is because choosing the action $a_2$ at $s_0$ is unsafe as it may lead the game to the losing state $s_5$ with positive probability. And, by choosing $a_0$ or $a_1$, P1's new belief state would be $B = \{s_0, s_1, s_2\}$, at which she does not have a consistent winning action, \ie~if the true game state is $s_1$ then P1 should choose action $a_0$ and consider action $a_1$ to be unsafe. However, if the true game state is $s_2$ then action $a_0$ is unsafe. Not knowing which state is the true state, P1 cannot select any action without risking running into the sink state $s_5$.
    
    On the contrary, when P1 can query sensors but P2 cannot attack any of them (the game is still a \ac{pomdp}), she has an almost-sure winning strategy at $s_0$ by selecting, say, $a_0$ as control action and $\sense_0$ as sensing action. This is because the sensing action $\sense_0$ allows P1 to determine the resulting state after choosing the action $a_0$.
  
\end{example}

Next, we analyze how P1 can synthesize her almost-sure winning active sensing and control strategies when she thinks the sensors may have probabilistic failures.

\subsection{The Game Perceived by P1}  

We are interested to synthesize P1's active sensing and control strategies to reach $F$ with probability one, when she mistakes sensor failures as probabilistic node failures. Due to P1's incorrect interpretation of sensor failures, from P1's perspective, $G$ is a \ac{pomdp} with active sensing actions.

\begin{definition}

The game $G$ as perceived by na\"ive P1 is the tuple,
    \[
        G^1 = \langle S, \act_1 = \ctrlAct \times \senseAct, P, \calO, \obs^1, o_0, F\rangle,
    \]
    where $P$ is the same probabilistic transition function as in game $G$.     The probabilistic observation function $\obs^1(s, \sense, o)$ is defined such that, given a state $s \in S$ and P1's sensing action $\sense \in \senseAct$, the probability of obtaining an observation $o \in \calO$ is strictly positive, if there exists an attack action $\att \in \attackAct$ enabled at the state $s$, such that $\obs(s, \sense,\att) = o$.
    \end{definition}

To derive an almost-sure winning strategy for P1 in $G^1$, we construct an \ac{mdp}  with perfect observation, in which the belief of P1 about the current state is made explicit. Our construction is adopted from reachability analysis in POMDP \cite{baierDecisionProblemsProbabilistic2008}. Also, for qualitaitive planning, we only need to know the support of a distribution for the next state given a state-action pair, but need not to know the exact distribution. 
\begin{definition}
    Given $G^1$, the perfect-observation \ac{mdp} of P1 is a tuple, 
    \[
        H = \langle Q \cup \{q_F\}, \act_1, \delta, q_0, q_F \rangle,
    \]
    where 
    \begin{itemize}
        \item $Q = \{(s, B) \mid \exists o \in \calO \text{ s.t. } B \subseteq o \text{ and } s \in B \}$ is the set of states;
        
        \item $q_F$ is a single final state. It is also a sink state.
        
        \item $q_0 = (s_0, \{s_0\})$ is the initial state;
        
        \item Given a state $(s, B)$ and action $(a, \sense)$, the transition function $\delta$ is defined by,
        \begin{itemize}
            \item If $\post_G(s,a)\cap F = \emptyset$ then  $\delta((s, B), (a, \sense), (s', B')) > 0$  if and only if $s' \in \post_G(s, a)$ and there exists an $o \in \calO$ with $\obs^1(s', \sense, o) > 0$ such that $B' = \post_{G}(B, a) \cap o$;
            \item If $\post_{G}(s, a) \subseteq F$ then $\delta((s, B), (a, \sense), q_F) = 1$;
            
            \item If  $\post_G(s,a)\cap F\ne  \emptyset$ and $\post_G(s,a)\setminus F \ne  \emptyset$, then $\delta((s, B), (a, \sense), q_F) > 0$ and $\delta((s, B), (a, \sense), (s', B')) > 0$ for each $s' \in \post_{G}(s, a)$ such that there exists an $o \in \calO$ with $\obs^1(s', \sense, o) > 0$ such that $B' = \post_{G}(B, a) \cap o$.
        \end{itemize}

    \end{itemize}
\end{definition}

The transition function can be understood as follows. Given a state $(s, B)$ and action $(a, \sense)$, if  $\post_{G}(s, a) \subseteq F$, then the game reaches the final state $q_F$ with probability one. If none of the states $s' \in \post_G(s, a)$ is a final state, then with a sensing action $\sense$ and an observation $o$, the game reaches each state $(s', B')$ with a positive probability for which $P(s, a, s') > 0$ and the belief $B'$ is \emph{consistent} with the observation $o$. Lastly, if there exists $s'\in \post_G(s, a) \setminus F$ and $\post_G(s,a)\cap F\ne \emptyset$, then the game reaches $q_F$ and all  $(s', B')$ where $s'$  is reachable from $s$ with a positive probability, and the belief $B'$ is consistent with some observation.

As the belief is constructed only using observations, P1, after observing two observation-equivalent play $\rho$ and $\rho'$ will generate the same belief $B$. Thus, we say two states $q = (s, B)$ and $q' = (s', B')$ are (observation-)equivalent to P1 whenever $B = B'$. We denote the equivalence of two states by $q \sim q'$ and the set of all states equivalent to $q$ by $[q]_\sim$. A memoryless\footnote{Memoryless strategies are sufficient for qualitative analysis of \ac{pomdp}s \cite{baierDecisionProblemsProbabilistic2008}.} randomized strategy of P1 $\pi: Q  \rightarrow \dist(\ctrlAct \times \senseAct)$ is said to be \emph{equivalence-preserving} if and only if $\pi(q) = \pi(q')$ whenever $q \sim q'$.     

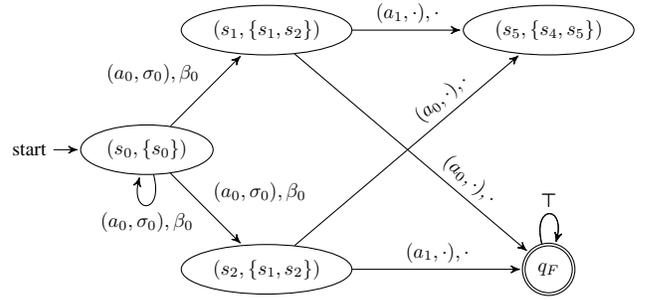
\begin{figure}[t]
        \centering
        \begin{tikzpicture}[->,>=stealth',shorten >=1pt,auto,node distance=5cm,
        scale = 0.75,transform shape]
        \tikzstyle{every state}=[fill=white,draw,ellipse]

  \node[state, initial] (0) {$(s_0, \{s_0\})$};
  \node[state] (1) [above right of=0, node distance =3cm] {$(s_1, \{s_1, s_2\})$};
  \node[state] (2) [below right of=0,node distance =3cm] {$(s_2, \{s_1, s_2\})$};
  \node[state] (3) [right of=1] {$(s_5, \{s_4, s_5\})$};
  \node[state,accepting] (4) [right of=2] {$q_F$};

  \path 
    (0) edge[loop below]  node {$(a_0, \sense_0), \att_0$} (0)
    (0) edge  node {$(a_0, \sense_0), \att_0$} (1)
    (0) edge  node {$(a_0, \sense_0), \att_0$} (2)
    
    (1) edge  node[pos=0.7,sloped] {$(a_0, \cdot), \cdot$} (4)
    (1) edge  node {$(a_1, \cdot), \cdot$} (3)
    
    (2) edge  node[pos=0.7,sloped] {$(a_0, \cdot), \cdot$} (3)
    (2) edge  node {$(a_1, \cdot), \cdot$} (4)
    
    (4) edge[loop above]  node {$\top$} (4)
    (4) edge[loop above]  node {$\top$} (4)
  ;

\end{tikzpicture}
        \caption{A subset of perfect-observation \ac{mdp} constructed by P1 in her mind. The figure shows the relevant states and edges when P1 chooses $(a_0, \sense_0)$ at the initial state $(s_0, \{s_0\})$ and sensor $0$ fails.}
        \label{fig:ex-H}
\end{figure}

\setcounter{example}{0}
\begin{example}[Part II]
    Fig.~\ref{fig:ex-H} shows a subset of perfect-observation \ac{mdp} that P1 constructs in her mind when she chooses the action $(a_0, \sense_0)$ at initial $q_0 = (s_0, \{s_0\})$ and the sensor $0$ (corresponding to attack action $\att_0$) fails. The game may reach any of the states $(s_0, \{s_0\})$, $(s_1, \{s_1, s_2\})$ or $(s_2, \{s_1, s_2\})$ with a positive probability. For instance, if the true state transitions from $s_0$ to $s_1$, then P1 would obtain an observation $o = \{s_1, s_2\}$ as sensor $0$ is attacked and the belief of P1 is updated to $B' = \post_G(s_0, a_0) \cap o = \{s_0, s_1, s_2\} \cap \{s_1, s_2\} = \{s_1, s_2\}$. When P1 chooses an action $(a_0, \cdot)$ at $(s_1, \{s_1, s_2\})$, where $\cdot$ can be any sensing action, the game is ensured to reach the final state regardless of any sensor failure (because $\post_G(s_1, a_0) = \{q_F\}$). Similarly, when P1 chooses an action $(a_1, \cdot)$ at $(s_1, \{s_1, s_2\})$, the game reaches the state $(s_5, \{s_4, s_5\})$. 
    
    However, it is noted that P1 cannot distinguish whether she is in $(s_1, \{s_1, s_2\})$ or $(s_2, \{s_1, s_2\})$. We observe that the action $(a_0, \sense_0)$, which was winning for P1 at the initial state $(s_0, \{s_0\})$ when no sensor failures are considered (see Example~\ref{ex:pt-1} Part I), is no longer winning for her. This is because the sensor failure results in P1's belief state to be $\{s_1, s_2\}$, at which she does not have an  action that almost-surely reaches $q_F$---as P1 cannot distinguish whether the state is $(s_1, \{s_1,s_2\})$ or $(s_2,\{s_1,s_2\})$.
    
\end{example}

Next, we present Alg.~\ref{alg:pomdp-reachability} to synthesize P1's almost-sure winning belief-based strategy. We denote by $\post_H(q, (a ,\sense)) =\{q' \mid \delta(q,(a,\sense))>0\}$. Given a state $q \in Q \cup \{q_F\}$, the set of its predecessors state-action pairs is denoted by 
\begin{multline*}
    \pre(q) = \{(p, (a, \sense))  \mid   q \in \post_{H}(p, (a,\sense))\}.
\end{multline*}
and generalize this operator to subsets of states, $\pre(X) = \cup_{q\in X} \pre(q)$. The algorithm starts by identifying the set $L_0$ of states from which there \emph{does not} exist a path (a sequence of transitions with positive probabilities) to reach the final state $q_F$. That is, the states in $L_0$ are clearly not almost-sure winning for P1. $L_0$ can be computed using standard graph algorithms over $H$. Subsequently, in $k$-th iteration, the algorithm identifies and eliminates those actions at predecessors $q$ of some $u \in L_k$ which visit $L_k$ in one step with a positive probability. As P1 cannot distinguish between equivalent states, if she removes an action $(a, \sense)$ from the set of allowable actions $\pi(q)$ at $q$ then she must also remove that same action for any state $p$ that is equivalent to $q$. As P1 does not have an almost-sure winning strategy from any state $q$ whose allowable actions set $\pi(q)$ is empty, such a state is added to $L_{k+1}$. The $k$ increments by one.  The process repeats until for some $k$, $L_{k+1} =\emptyset$.

\begin{algorithm}
\caption{Belief-based Strategy in POMDP Reachability }
\label{alg:pomdp-reachability}

\begin{algorithmic}[1]
\item[\textbf{Inputs:}] $H$ 
\item[\textbf{Outputs:}] $\win_1$: Almost-sure winning region of P1, $\pi$: belief-based almost-sure winning strategy of P1
    \State $L_0 \gets \{q \in  Q \mid q_F \text{ is not reachable from } q \}$
    \State For all $q \in L$, $\pi(q)=\emptyset$ and for all $q \in Q \setminus L$, $\pi(q)= \{(a,\sense)\mid \delta(q,(a,\sense)) \text{ is defined.}\}$
    \State $k\gets 0$
    \While{$L_k \neq \emptyset$}
        \State $L_{k+1} \gets \emptyset$
        \ForAll{$u \in L_{k}$}
            \ForAll{$(q, (a, \sense)) \in \pre(u)$}
                \ForAll{$p \in [q]_\sim$}
                    \State Remove $(a, \sense)$ from $ \pi(p)$
                    \If{$ \pi(p)  = \emptyset$ and $p \notin \bigcup_{i=0}^k L_i$}
                        \State Add $p$ to $L_{k+1}$
                    \EndIf
                \EndFor
            \EndFor
        \EndFor
        \State $k\gets k+1$.
    \EndWhile
    \State \Return $\win_1 = Q \setminus \bigcup_{i=0}^k L_i$, $\pi$.
\end{algorithmic}
\end{algorithm}

\begin{theorem}
    P1 has an almost-sure winning strategy in $G^1$ to visit $F$ if and only if $q_0 \in \win_1$, where $\win_1$ is the set of almost-sure winning states computed by Alg.~\ref{alg:pomdp-reachability}. 
\end{theorem}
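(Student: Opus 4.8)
The plan is to establish both directions of the equivalence by analyzing the fixpoint computation performed by Alg.~\ref{alg:pomdp-reachability} over the perfect-observation MDP $H$, while carefully accounting for the equivalence-preserving constraint $\pi(q)=\pi(q')$ whenever $q\sim q'$. The central object is the decreasing sequence of "candidate-losing" sets $L_0 \subseteq L_0\cup L_1\subseteq\ldots$, together with the shrinking action-assignment $\pi$. I would first record the loop invariant that the algorithm maintains: (i) for every $q\notin\bigcup_{i\le k}L_i$, the surviving action set $\pi(q)$ is nonempty and equals $\pi(q')$ for all $q'\sim q$; and (ii) every action removed from some $\pi(q)$ was removed because it sends $q$ into $\bigcup_{i\le k}L_i$ with positive probability, or because it was removed from an equivalent state for that reason. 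Since $Q$ is finite and $\pi$ only shrinks, termination is immediate, and I would let $\win_1 = Q\setminus\bigcup_i L_i$ and $\pi$ denote the outputs at termination.

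For the "if" direction ($q_0\in\win_1 \Rightarrow$ P1 wins), I would show the output strategy $\pi$, interpreted as "play uniformly at random among the actions in $\pi(q)$," is almost-sure winning from every $q\in\win_1$ in $G^1$. The key structural facts are: (a) $\win_1$ is closed under $\pi$, i.e. for $q\in\win_1$ and $(a,\sigma)\in\pi(q)$ we have $\post_H(q,(a,\sigma))\subseteq\win_1\cup\{q_F\}$ — this holds because at termination no action in $\pi(q)$ reaches any $L_i$; (b) from every $q\in\win_1$, under $\pi$ there is a positive-probability path to $q_F$ — this is exactly the statement that $q\notin L_0$ relative to the restricted graph, which one proves by showing that if some $q\in\win_1$ had no $\pi$-path to $q_F$ it would have been placed in some $L_{k+1}$ (its action set would have been emptied, or it would already lie in an $L_i$). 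Combining (a) and (b) with the standard argument for reachability in finite MDPs — a sub-MDP that is closed and in which $q_F$ is reachable with positive probability from every state is almost-surely winning, because the probability of never reaching $q_F$ after $n|\win_1|$ steps decays geometrically — yields almost-sure reachability of $q_F$, which by the construction of $H$ (the transitions into $q_F$ occur exactly when $\post_G(s,a)\cap F\ne\emptyset$) corresponds to P1 \emph{knowing} a state in $F$ has been reached in $G^1$. The equivalence-preservation of $\pi$ ensures the induced strategy is observation-based.

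For the "only if" direction, I would prove the contrapositive: if $q_0\in\bigcup_i L_i$ then P1 has no almost-sure winning strategy. I would argue by induction on the index $k$ at which a state enters $L_k$ that for every $q\in L_k$ and every observation-based (equivalently, equivalence-preserving memoryless) P1 strategy, there is an adversary response (a choice of attack action, i.e. a resolution of the probabilistic observation $\obs^1$) forcing positive probability of never reaching $q_F$. Base case $k=0$: no path to $q_F$ exists at all, so the claim is trivial. Inductive step: if $q\in L_{k+1}$, then $\pi(q)$ was emptied, meaning \emph{every} action $(a,\sigma)$ available at $q$ either (because of $q$ itself or some $q'\sim q$, and here equivalence-preservation forces P1 to use the same action at $q'$) leads with positive probability into $\bigcup_{i\le k}L_i$; apply the induction hypothesis at that successor. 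A measure-theoretic bookkeeping step then converts "each action has a bad successor with positive probability, reached in boundedly many steps" into "the play avoids $q_F$ forever with positive probability."

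I expect the main obstacle to be the "only if" direction, specifically the interaction between the equivalence-preserving constraint and the adversarial (rather than purely stochastic) nature of the observation function in $G^1$: one must show that the \emph{syntactic} action-removal rule in Alg.~\ref{alg:pomdp-reachability} — which removes an action from all $\sim$-equivalent states simultaneously — is exactly tight, i.e. it removes an action if and only if no observation-based strategy using that action at those states can be winning. The subtlety is that $\obs^1(s',\sigma,o)>0$ merely when \emph{some} attack $\beta$ realizes $o$, so the "probabilistic" branching in $H$ overapproximates what a fixed stochastic environment would do but exactly matches what an adversarial P2 can do; I would need a short lemma relating attack strategies in $G$ to resolutions of the nondeterminism/probabilism in $H$, and invoke that memoryless equivalence-preserving strategies suffice for qualitative POMDP reachability (cited from \cite{baierDecisionProblemsProbabilistic2008}) to close the loop.
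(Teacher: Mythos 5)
Your overall architecture---closure of $\win_1$ under the surviving actions plus positive-probability reachability of $q_F$ for the ``if'' direction, and induction on the removal rank $k$ for the ``only if'' direction---is the standard one and substantially expands the paper's proof, which is a four-line sketch that cites \cite{baierDecisionProblemsProbabilistic2008} for the reduction from $G^1$ to $H$ and then simply asserts that the action-removal rule is tight. However, your justification of step (b) is wrong, and this is a genuine gap. You claim that if some $q\in\win_1$ had no path to $q_F$ in the pruned sub-MDP, ``it would have been placed in some $L_{k+1}$ (its action set would have been emptied, or it would already lie in an $L_i$).'' That implication does not follow from Alg.~\ref{alg:pomdp-reachability}: the unreachable set $L_0$ is computed only once, against the \emph{unpruned} $H$, and thereafter a state is added to $L_{k+1}$ only when its action set becomes \emph{empty}. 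A state can retain a nonempty action set while every action witnessing reachability of $q_F$ has been deleted. Concretely, take $q$ with two actions: action $b$ reaches $q_F$ and some $u\in L_0$ each with positive probability, and action $a$ moves to a state $q'$ whose only action returns to $q$. Then $q\notin L_0$, the loop removes $b$ (because of $u$) but leaves $\pi(q)=\{a\}$, so $q$ survives into $\win_1$; yet under the pruned actions the play cycles between $q$ and $q'$ forever and reaches $q_F$ with probability $0$. Nothing in the belief construction of $H$ obviously rules out this configuration. To close the gap you must either add an outer loop that recomputes the set of states from which $q_F$ is unreachable in the pruned MDP and iterates to a fixpoint (the standard nested-fixpoint algorithm for almost-sure reachability), proving (a)+(b) for that modified procedure, or prove a structural lemma showing the offending configuration cannot arise in $H$. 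The paper's own sketch silently assumes (b) as well, so you have not missed something the paper supplies; but your proposed derivation of (b) fails as written.

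A secondary, fixable imprecision concerns the ``only if'' induction: when $(a,\sense)$ is removed from $\pi(q)$ because an \emph{equivalent} state $q'\sim q$ has a bad successor, the induction hypothesis applied at that successor says something about plays from $q'$, not from $q$. The statement carried through the induction should therefore be phrased at the level of beliefs---for every equivalence-preserving strategy there exists some $s$ in the current belief $B$ from which $q_F$ is reached with probability less than one---and you must verify that the witnessing bad successor $(\tilde s,B')$ is reachable from \emph{some} $(s^{*},B)$ with $s^{*}\in B$, which holds because $B'\subseteq\post_G(B,a)$ and observations are equivalence classes. Since $q_0=(s_0,\{s_0\})$ has a singleton belief, this belief-level statement yields the theorem at $q_0$. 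You correctly flag this interaction as the main obstacle, but the bookkeeping needs to be made explicit for the induction to compose.
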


\begin{proof}[Proof (Sketch)]
    It is known from \cite{baierDecisionProblemsProbabilistic2008} that P1 has an almost-sure winning strategy in $G^1$ to visit $F$ if and only if she has an almost-sure winning strategy in $H$ to visit $q_F$. Thus, for the statement to hold, it must be the case that Alg.~\ref{alg:pomdp-reachability} must identify every almost-sure winning state in $H$. This must be true because Alg.~\ref{alg:pomdp-reachability} removes $q$ from \ac{mdp} only if there is no safe action given P1's belief $B$ at $q=(s,B)$, i.e. for each enabled action, there exists a state $p\in [q]_\sim$ such that by taking that action, P1 may reach a state from $p$ from which $q_F$ is not reachable with a positive probability. 
\end{proof}

\textbf{ASW Strategy.} Given $\pi: \win_1\rightarrow 2^{\act_1}$ maps each state $q\in \win_1$ to a set of allowed actions (see Alg.~\ref{alg:pomdp-reachability}), P1's almost-sure winning strategy selects each action in $\pi(q)$ randomly with a positive probability. It is noted that P1's strategy $\pi$ is indeed a multi-strategy, or an infinite set of randomized strategies, because different choices of the probabilistic distributions given the set of states yield different randomized strategies. Again, for qualitative analysis, no matter which randomized strategy $\pi_1$ P1 selects, as long as the support   $\supp(\pi_1(q))= \pi(q)$ for any $q\in \win_1$, P1 can ensure almost-sure winning in her perceived POMDP with probabilistic sensor failures.

\setcounter{example}{0}
\begin{example}[Part III]
    Consider the subset of perfect-observation \ac{mdp} in Fig.~\ref{fig:ex-H}. As $s_4$ is unreachable from $s_5$, all states $(s_5, B)$ for any $B \subseteq S$ are contained in $L_0$. Thus, while investigating the state $(s_1, \{s_1, s_2\})$,  Alg.~\ref{alg:pomdp-reachability} identifies  action $(a_1, \sense_0)$ to lead to $L_0$. Consequently, the action $(a_1, \sense_0)$ is removed from states $(s_1, \{s_1, s_2\})$ and $(s_2, \{s_1, s_2\})$ as they are equivalent. Similarly, the action $(a_0, \sense_0)$ is removed from both states, thereby eliminating them from set of almost-sure winning states. In the next iteration, the initial state is also removed from set of almost-sure winning states---concluding that action $(a_0, \sense_0)$ is not almost-sure winning for P1.
    
    Next, consider Fig.~\ref{fig:ex-H2}, which shows a subset of perfect-observation \ac{mdp} when P1 chooses the action $(a_0, \sense_1)$ at initial $q_0 = (s_0, \{s_0\})$ and either the sensor $0$ or $2$ (corresponding to attack action $\att_0$ and $\att_2$) fails. These states are not eliminated by Alg.~\ref{alg:pomdp-reachability}. To understand the winning strategy, observe the equivalent states $(s_2, \{s_0, s_2\})$, at which the action $(a_0, \sense_1)$ reaches $q_F$ with positive probability or visits one of the states among $(s_1, \{s_1\}), (s_0, \{s_0, s_2\}), (s_0, \{s_0, s_1\})$. For ease of reading, Fig.~\ref{fig:ex-H2} does not show the outgoing edges from $(s_0, \{s_0, s_2\})$, $(s_0, \{s_0, s_1\})$, $(s_1, \{s_0, s_1\})$, $(s_1, \{s_1\})$ and $(s_2, \{s_2\})$. However, it can be verified that their outgoing edges visit $q_F$ or a state among the same set of states. Thus, by choosing the action $(a_0, \sense_0)$ within this subset of \ac{mdp}, P1 is guaranteed to visit $q_F$ with probability 1 \cite{puterman2014markov}.

    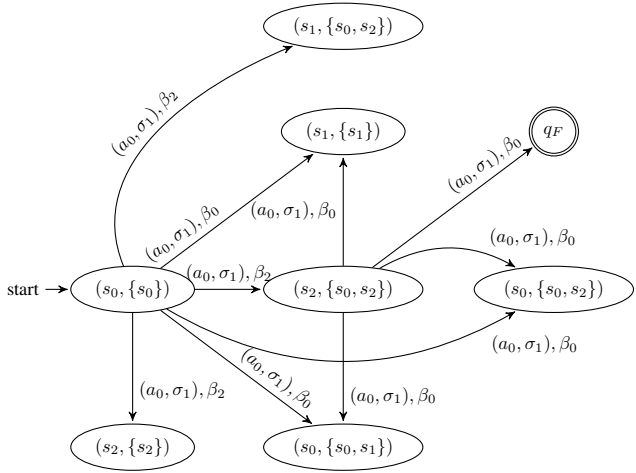
\begin{figure}
        \centering
        \begin{tikzpicture}[->,>=stealth',shorten >=1pt,auto,node distance=4cm,
        scale = 0.7, transform shape]
        \tikzstyle{every state}=[fill=white,draw,ellipse]

  \node[state, initial ] (0) {$(s_0, \{s_0\})$};
  \node[state] (1) [right of=0] {$(s_2, \{s_0, s_2\})$};
  \node[state ] (4) [right of=1] {$(s_0, \{s_0, s_2\})$};
  \node[state ] (3) [below of=1, node distance =3cm] {$(s_0, \{s_0, s_1\})$};
  \node[state ] (5) [above of=1, node distance =3cm] {$(s_1, \{s_1\})$};
  \node[state, accepting] (7) [right of=5] {$q_F$};
  \node[state] (6) [below of=0, node distance =3cm] {$(s_2, \{s_2\})$};
  \node[state] (2) [above of=5, node distance =2cm] {$(s_1, \{s_0, s_2\})$};

  \path 
    (0) edge  node[pos=0.7,sloped] {$(a_0, \sense_1), \att_0$} (3)
    (0) edge  node[sloped] {$(a_0, \sense_1), \att_2$} (1)
    (0) edge  node[sloped,pos=0.2] {$(a_0, \sense_1), \att_0$} (5)
    (0) edge[bend left, out=60]  node[sloped] {$(a_0, \sense_1), \att_2$} (2)
    (0) edge[bend right]  node[below,xshift=1cm, pos=0.9] {$(a_0, \sense_1), \att_0$} (4)
    (0) edge  node[pos=0.7] {$(a_0, \sense_1), \att_2$} (6)
    (1) edge  node[pos=0.75]{$(a_0, \sense_1), \att_0$} (3)
    (1) edge  node  {$(a_0, \sense_1), \att_0$} (5)
    (1) edge[bend left]  node[pos=0.75]{$(a_0, \sense_1), \att_0$} (4)
    (1) edge  node[pos=0.75,sloped] {$(a_0, \sense_1), \att_0$} (7)
  ;

\end{tikzpicture}
        \caption{A subset of perfect-observation \ac{mdp} constructed by P1 in her mind. The figure shows the relevant states and edges when P1 chooses the action $(a_0, \sense_1)$ at initial $q_0 = (s_0, \{s_0\})$ and either the sensor $0$ or $2$ fails.}
        \label{fig:ex-H2}
    \end{figure}
\end{example}

\subsection{The Game Perceived by P2}\label{subsec:p2game}

Unlike P1, P2 has perfect observability and is also aware of P1's misinterpretation of the sensor failures. Thus, in addition to keeping track of the true state of the game, he can also track P1's belief state. The resulting game model for P2 can be represented as follows. 

\begin{definition}
    Given P1's almost-sure winning, observation-based  strategy $\pi$ in $H$ such that P1 selects any action in $\pi(s,B)$ with a positive probability at a belief state $B$, P2's sensor attack strategy can be computed from qualitative planning in the following \ac{mdp} given as the tuple,
    \[
        G_2 = \langle \win_1, \attackAct, \delta_A, q_0, \win^1 \cap ((S\setminus F) \times 2^S) \rangle,
    \]
    where $\win_1$ is the set of almost-sure winning states of P1 in $H$. Given states $q=(s,B), q'=(s',B') \in \win_1$ and an attack action $\att \in \attackAct$, we have $\delta_A(q, \att, q') > 0$ if and only if there exists $(a, \sense) \in \pi(q)$ such that $s' \in \post_{G}(s, a)$ and $B' = \post_G(B, a) \cap \obs(s', \sense, \att)$. P2's objective is to ensure the game  staying in the set  $\win^1 \cap ((S\setminus F) \times 2^S)$ with probability one.
\end{definition}

A transition in $G_2$ is interpreted as follows. For a given state $(s,B) \in \win_1$, P1 thinks that this belief $B$ is almost-sure winning for her. Therefore, she chooses any action $(a,\sense) \in  \pi(s,B)$ with a non-zero probability. However, a state $(s',B')$ in $G_2$ will be reached probabilistically but partially controlled by P2: state $s'$ will be reached with a positive probability if $s' \in \post_G(s, a)$, the belief $B'$ is decided jointly by the sensing action and attack action, $B' = \post_G(B, a)\cap \obs(s',\sense,\att)$. It is important to note that this belief update is controlled partially by P2 given his choice of attack action. As P2's game $G_2$ is a perfect-observation game, the set of his almost-sure winning states to prevent P1 from reaching $F$ can be computed using the classical algorithm \cite[Alg.~46]{baier2008principles}.

Our interest in the P2's almost-sure winning region is to identify if there exist any states which P1 misinterprets to be almost-sure winning for her, but they are in fact almost-sure winning for P2.
This analysis also provides a way to detect adversarial attacks, for example, policy inference  or behavior cloning algorithms \cite{behaviorCloning2018} from the observed sensor failure data can be used to infer the ``rule'' behind the sensor failures and compare it with the attack policy. 

We illustrate the existence of such a state   using a simpler example shown in Fig.~\ref{fig:P2-winG}. The example consists of 4 states covered by three sensors indexed $0$ (blue), $1$ (red) and $2$ (green) covering the states $\gamma(0) = \{s_1\}$, $\gamma(1) = \{s_0, s_1\}$, and $\gamma(2) = \{s_2, s_3\}$, respectively. P1 has two sensing actions: $\sense_0$ and $\sense_1$ which query the sensors $\{1, 2\}$ and $\{1, 3\}$, respectively. P2 has three attack actions: $\att_j$ which jams the sensor $j$, for $j=0, 1, 2$. The corresponding P2's game is shown in Fig.~\ref{fig:P2-winH} (edges corresponding to $\sense_2, \att_2$ are omitted for clarity). Consider the states $(s_0, \{s_0\})$ and $(s_0, \{s_0, s_1\})$, at which P1's strategy is to select the action $(a_0, \sense_0)$. If P2 always selects $\att_0$ at these states, then the game is restricted within the states $(s_0, \{s_0\})$, $(s_0, \{s_0, s_1\})$ and $(s_1, \{s_0, s_1\})$, indefinitely. In other words, the states $(s_0, \{s_0\})$, $(s_0, \{s_0, s_1\})$ and $(s_1, \{s_0, s_1\})$, which P1 considers almost-sure winning for her, are in fact almost-surely winning for P2. P1's task failure is because she mistakes that sensor failure caused by $\beta_1$ is possible and sensor failure caused by $\beta_0$ cannot be persistent.

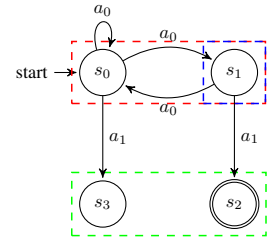
\begin{figure}
    \centering
    \begin{tikzpicture}[->,>=stealth',shorten >=1pt,auto,node distance=2.5cm,
        scale = 0.7,transform shape]

  \node[state, initial] (0) {$s_0$};
  \node[state] (1) [right of=0] {$s_1$};
  \node[state, accepting] (2) [below of=1] {$s_2$};
  \node[state] (3) [below of=0] {$s_3$};
    \node [container,fit=(0) (1),draw=red,dashed,line width=0.2mm  ] (container) {};
    \node [container,fit=(1),draw=blue,dashed,line width=0.2mm ] (container) {};
    \node [container,fit=(2) (3) ,draw=green, dashed,line width=0.2mm] (container) {};
  \path (0) edge[loop above]  node {$a_0$} (0)
        (0) edge[bend left]  node {$a_0$} (1)
        (1) edge[bend left]   node {$a_0$} (0)
        (0) edge   node {$a_1$} (3)
        (1) edge  node {$a_1$} (2)
  ;

\end{tikzpicture}
    \caption{A two-player stochastic game with partially controllable observation function. The dashed regions represent the sensors: $0$ (red), $1$ (blue), $2$ (green).}
    \label{fig:P2-winG}
\end{figure}

\begin{figure}
    \centering
    \begin{tikzpicture}[->,>=stealth',shorten >=1pt,auto,node distance=4cm,
        scale = 0.7,transform shape]
        \tikzstyle{every state}=[fill=white,draw,ellipse]

  \node[state, initial,fill=red!20] (0) {$s_0, \{s_0\}$};
  \node[state,fill=red!20] (1) [below right of=0] {$s_0, \{s_0, s_1\}$};
  \node[state,fill=red!20] (2) [above right of=1] {$s_1, \{s_0, s_1\}$};
  \node[state] (3) [below left of=1] {$s_1, \{s_1\}$};
  \node[state, accepting] (4) [below right of=1] {$q_F$};

  \path 
        (0) edge[loop above]  node {$(a_0, \sense_0), \att_1$} (0)
        (0) edge[bend right]  node {$(a_0, \sense_0), \att_1$} (3)
        (0) edge[bend right]  node[sloped] {$(a_0, \sense_0), \att_0$} (1)
        (0) edge[bend left]  node[sloped] {$(a_0, \sense_0), \att_0$} (2)
        (1) edge[bend right]  node[sloped,below] {$(a_0, \sense_0), \att_1$} (0)
        (1) edge[loop right]  node {$(a_0, \sense_0), \att_0$} (1)
        (1) edge  node[sloped] {$(a_0, \sense_0), \att_0$} (2)
        (2) edge  node[above] {$(a_0, \sense_0), \att_0$} (0)
        (1) edge  node {$(a_0, \sense_0), \att_1$} (3)
        (3) edge[below]  node {$(a_1, \cdot), \cdot$} (4)
  ;

\end{tikzpicture}
    \caption{P2's perfect-observation \ac{mdp} game corresponding to the game in Fig.~\ref{fig:P2-winG}. All states  in this figure are considered by P1 to be almost-sure winning. But the shaded, red states are P2's almost-sure winning states.}
    \label{fig:P2-winH}
\end{figure}
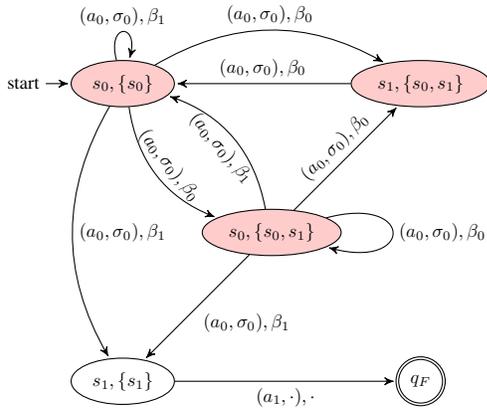

\section{Discussion and Conclusion}
For the class of stochastic games with partially controllable observation function in which P1 has partial and P2 has perfect observation, we presented a method to identify conditions under which the attacker has an almost-sure winning, sensor-attack strategy but when the system (a na\"ive player) considers to be her almost-sure winning state due to incorrectly treat sensor failures as probabilistic events.

With such an understanding of sensor-attack strategy, we will investigate the design of secured sensing and control strategy for a smart agent  (P1) who is aware of the adversarial sensor attacks. We will also be interested in  strategic  sensor design which ensures, despite a na\"ive P1, the attacker's almost-sure winning region is minimized, or does not include the initial game state. Quantitative planning under sensor attacks with formal methods  will also be investigated.

\bibliographystyle{IEEEtran}
\bibliography{refs}

\begin{thebibliography}{10}
\providecommand{\url}[1]{#1}
\csname url@samestyle\endcsname
\providecommand{\newblock}{\relax}
\providecommand{\bibinfo}[2]{#2}
\providecommand{\BIBentrySTDinterwordspacing}{\spaceskip=0pt\relax}
\providecommand{\BIBentryALTinterwordstretchfactor}{4}
\providecommand{\BIBentryALTinterwordspacing}{\spaceskip=\fontdimen2\font plus
\BIBentryALTinterwordstretchfactor\fontdimen3\font minus
  \fontdimen4\font\relax}
\providecommand{\BIBforeignlanguage}[2]{{%
\expandafter\ifx\csname l@#1\endcsname\relax
\typeout{** WARNING: IEEEtran.bst: No hyphenation pattern has been}%
\typeout{** loaded for the language `#1'. Using the pattern for}%
\typeout{** the default language instead.}%
\else
\language=\csname l@#1\endcsname
\fi
#2}}
\providecommand{\BIBdecl}{\relax}
\BIBdecl

\bibitem{dibajiSystemsControlPerspective2019}
S.~M. Dibaji, M.~Pirani, D.~B. Flamholz, A.~M. Annaswamy, K.~H. Johansson, and
  A.~Chakrabortty, ``\BIBforeignlanguage{en}{A systems and control perspective
  of {{CPS}} security},'' \emph{\BIBforeignlanguage{en}{Annual Reviews in
  Control}}, vol.~47, pp. 394--411, Jan. 2019.

\bibitem{Xing2010}
K.~Xing, S.~S.~R. Srinivasan, M.~J.~M. Rivera, J.~Li, and X.~Cheng, ``Attacks
  and countermeasures in sensor networks: {{A}} survey,'' in \emph{Network
  Security}, S.~C.-H. Huang, D.~MacCallum, and D.-Z. Du, Eds.\hskip 1em plus
  0.5em minus 0.4em\relax {Boston, MA}: {Springer US}, 2010, pp. 251--272.

\bibitem{cetinkayaOverviewDenialofServiceAttacks2019}
A.~Cetinkaya, H.~Ishii, and T.~Hayakawa, ``\BIBforeignlanguage{en}{An
  {{Overview}} on {{Denial}}-of-{{Service Attacks}} in {{Control Systems}}:
  {{Attack Models}} and {{Security Analyses}}},''
  \emph{\BIBforeignlanguage{en}{Entropy}}, vol.~21, no.~2, p. 210, Feb. 2019.

\bibitem{shoukrySecureStateEstimation2017}
Y.~Shoukry, P.~Nuzzo, A.~Puggelli, A.~L. {Sangiovanni-Vincentelli}, S.~A.
  Seshia, and P.~Tabuada, ``Secure {{State Estimation}} for
  {{Cyber}}-{{Physical Systems Under Sensor Attacks}}: {{A Satisfiability
  Modulo Theory Approach}},'' \emph{IEEE Transactions on Automatic Control},
  vol.~62, no.~10, pp. 4917--4932, Oct. 2017.

\bibitem{changSecureEstimationBased2018}
Y.~H. Chang, Q.~Hu, and C.~J. Tomlin, ``\BIBforeignlanguage{en}{Secure
  estimation based {{Kalman Filter}} for cyber\textendash physical systems
  against sensor attacks},'' \emph{\BIBforeignlanguage{en}{Automatica}},
  vol.~95, pp. 399--412, Sep. 2018.

\bibitem{showkatbakhshSecuringStateReconstruction2020}
M.~Showkatbakhsh, Y.~Shoukry, S.~N. Diggavi, and P.~Tabuada,
  ``\BIBforeignlanguage{en}{Securing state reconstruction under sensor and
  actuator attacks: {{Theory}} and design},''
  \emph{\BIBforeignlanguage{en}{Automatica}}, vol. 116, p. 108920, Jun. 2020.

\bibitem{niuOptimalSecureControl2020}
L.~Niu and A.~Clark, ``Optimal {{Secure Control With Linear Temporal Logic
  Constraints}},'' \emph{IEEE Transactions on Automatic Control}, vol.~65,
  no.~6, pp. 2434--2449, Jun. 2020.

\bibitem{wangSupervisoryControlDiscrete2019}
Y.~Wang and M.~Pajic, ``Supervisory {{Control}} of {{Discrete Event Systems}}
  in the {{Presence}} of {{Sensor}} and {{Actuator Attacks}},'' in \emph{2019
  {{IEEE}} 58th {{Conference}} on {{Decision}} and {{Control}} ({{CDC}})}, Dec.
  2019, pp. 5350--5355.

\bibitem{wakaikiSupervisoryControlDiscreteEvent2019}
M.~Wakaiki, P.~Tabuada, and J.~P. Hespanha,
  ``\BIBforeignlanguage{en}{Supervisory {{Control}} of {{Discrete}}-{{Event
  Systems Under Attacks}}},'' \emph{\BIBforeignlanguage{en}{Dynamic Games and
  Applications}}, vol.~9, no.~4, pp. 965--983, Dec. 2019.

\bibitem{dubreilSupervisoryControlOpacity2010}
J.~Dubreil, P.~Darondeau, and H.~Marchand, ``Supervisory {{Control}} for
  {{Opacity}},'' \emph{IEEE Transactions on Automatic Control}, vol.~55, no.~5,
  pp. 1089--1100, May 2010.

\bibitem{sabooriOpacityEnforcingSupervisoryStrategies2012}
A.~Saboori and C.~N. Hadjicostis, ``Opacity-{{Enforcing Supervisory
  Strategies}} via {{State Estimator Constructions}},'' \emph{IEEE Transactions
  on Automatic Control}, vol.~57, no.~5, pp. 1155--1165, May 2012.

\bibitem{wuSynthesisInsertionFunctions2014}
Y.-C. Wu and S.~Lafortune, ``\BIBforeignlanguage{en}{Synthesis of insertion
  functions for enforcement of opacity security properties},''
  \emph{\BIBforeignlanguage{en}{Automatica}}, vol.~50, no.~5, pp. 1336--1348,
  May 2014.

\bibitem{wangReliabilityWirelessSensor2017}
Y.~Wang and L.~Xing, ``Reliability of wireless sensor networks subject to
  phase-dependent probabilistic competing failures,'' in \emph{2017 {{Second
  International Conference}} on {{Reliability Systems Engineering}}
  ({{ICRSE}})}, Jul. 2017, pp. 1--9.

\bibitem{chatterjeePartialobservationStochasticGames2012}
K.~Chatterjee and L.~Doyen, ``Partial-observation stochastic games: {{How}} to
  win when belief fails,'' \emph{Proceedings of the Annual ACM/IEEE Symposium
  on Logic in Computer Science}, pp. 175--184, 2012.

\bibitem{chatterjeeSurveyStochasticOregular2012}
K.~Chatterjee and T.~A. Henzinger, ``\BIBforeignlanguage{en}{A survey of
  stochastic {$\omega$}-regular games},'' \emph{\BIBforeignlanguage{en}{Journal
  of Computer and System Sciences}}, vol.~78, no.~2, pp. 394--413, Mar. 2012.

\bibitem{bloemGraphGamesReactive2018}
R.~Bloem, K.~Chatterjee, and B.~Jobstmann, ``\BIBforeignlanguage{en}{Graph
  {{Games}} and {{Reactive Synthesis}}},'' in
  \emph{\BIBforeignlanguage{en}{Handbook of {{Model Checking}}}}, E.~M. Clarke,
  T.~A. Henzinger, H.~Veith, and R.~Bloem, Eds.\hskip 1em plus 0.5em minus
  0.4em\relax {Cham}: {Springer International Publishing}, 2018, pp. 921--962.

\bibitem{baierDecisionProblemsProbabilistic2008}
C.~Baier, N.~Bertrand, and M.~Gr{\"o}{\ss}er, ``\BIBforeignlanguage{en}{On
  {{Decision Problems}} for {{Probabilistic B\"uchi Automata}}},'' in
  \emph{\BIBforeignlanguage{en}{Foundations of {{Software Science}} and
  {{Computational Structures}}}}, R.~Amadio, Ed.\hskip 1em plus 0.5em minus
  0.4em\relax {Berlin, Heidelberg}: {Springer Berlin Heidelberg}, 2008, vol.
  4962, pp. 287--301.

\bibitem{puterman2014markov}
M.~L. Puterman, \emph{Markov decision processes: discrete stochastic dynamic
  programming}.\hskip 1em plus 0.5em minus 0.4em\relax John Wiley \& Sons,
  2014.

\bibitem{baier2008principles}
C.~Baier and J.-P. Katoen, \emph{Principles of model checking}.\hskip 1em plus
  0.5em minus 0.4em\relax MIT press, 2008.

\bibitem{behaviorCloning2018}
F.~Torabi, G.~Warnell, and P.~Stone, ``Behavioral cloning from observation,''
  in \emph{Proceedings of the Twenty-Seventh International Joint Conference on
  Artificial Intelligence, {IJCAI-18}}.\hskip 1em plus 0.5em minus 0.4em\relax
  International Joint Conferences on Artificial Intelligence Organization, 7
  2018, pp. 4950--4957.

\end{thebibliography}

\end{document}